\newcommand\blfootnote[1]{%
  \begingroup
  \renewcommand\thefootnote{}\footnote{#1}%
  \addtocounter{footnote}{-1}%
  \endgroup
}
\theoremstyle{thmstyleone}%
\newtheorem{theorem}{Theorem}
\newtheorem{lemma}{Lemma}
\theoremstyle{thmstyletwo}%
\theoremstyle{thmstylethree}%
\newtheorem{myDef}{Definition}%
\begin{document}

\title[Fairness-driven Skilled Task Assignment with Extra Budget in Spatial Crowdsourcing]{Fairness-driven Skilled Task Assignment with Extra Budget in Spatial Crowdsourcing}


\author[1]{\fnm{Yunjun} \sur{Zhou}}
\author[1]{\fnm{Shuhan} \sur{Wan}}

\author*[1]{\fnm{Detian} \sur{Zhang}}

\author[2]{\fnm{Shiting} \sur{Wen}}

\affil*[1]{\orgdiv{Institute of Artificial Intelligence, School of Computer Science and Technology}, \orgname{Soochow University}, \orgaddress{\city{Suzhou}, \state{Jiangsu}, \country{China}}}

\affil[2]{\orgdiv{Ningbo Institute of Technology}, \orgname{Zhejiang University}, \orgaddress{\city{Ningbo}, \state{Zhejiang}, \country{China}}}



\abstract{With the prevalence of mobile devices and ubiquitous wireless networks, spatial crowdsourcing has attracted much attention from both academic and industry communities. On spatial crowdsourcing platforms, task requesters can publish spatial tasks and workers need to move to destinations to perform them. In this paper, we formally define the Skilled Task Assignment with Extra Budget (STAEB), which aims to maximize total platform revenue and achieve fairness for workers and task requesters. In the STAEB problem, the complex task needs more than one worker to satisfy its skill requirement and has the extra budget to subsidize extra travel cost of workers to attract more workers. We prove that the STAEB problem is NP-complete. Therefore, two approximation algorithms are proposed to solve it, including a greedy approach and a game-theoretic approach. Extensive experiments on both real and synthetic datasets demonstrate the efficiency and effectiveness of our proposed approaches. \blfootnote{*Corresponding author(s). E-mail(s): \url{detian@suda.edu.cn}.} \blfootnote{\quad Contributing authors: \url{20214227073@stu.suda.edu.cn};} \blfootnote{\quad \url{20195227042@stu.suda.edu.cn}; \url{wensht@nit.zju.edu.cn};}}

\keywords{Spatial Crowdsourcing, Task Assignment, Extra budget, Fairness-driven}



\maketitle

\section{Introduction}
	\label{Sect.1}
	With the development of smart devices and high-speed wireless networks, spatial Crowdsourcing (SC) assigning moving workers to location-based tasks has recently gained much attention. Specifically, workers need to physically move to the specified locations to accomplish the task which published by task requesters. Such spatial crowdsourcing can be used in many applications, such as online taxi-calling services (e.g., DiDi and Uber), food delivery services (e.g., Grubhub, Eleme and Meituan), traffic monitoring (e.g., Waze) and geographical data generation (e.g., OpenStreetMap). However, some complex tasks require not only specific skills but also the need to ensure that workers are within a certain distance to ensure a fair gain.
	
	A fundamental problem in spatial crowdsourcing is task assignment. Most of the existing works on task assignment \cite{zhao2020predictive,to2016real,chen2020fair,liu2020budget,tong2019two,zheng2020Online} mainly assume that all the tasks are simple, and can be easily completed by a single worker such as delivering packages, taking photos, or reporting hot spots. However, an individual worker cannot complete some complex tasks, e.g., preparing for a party, decorating houses, and general cleaning. For example, decorating a house requires design drawings, painting walls, tiling, moving furniture, installing water and electricity, et al. Therefore, the platform needs to arrange multiple workers with different skills to meet the different needs of tasks. Furthermore, it is important to take into account the fairness of the assignment. Because for the same task, each worker not only needs to provide skill service but also has to pay different travel costs.
	
	Previous works~\cite{cheng2016taskmulti,cheng2019cooperation} on spatial crowdsourcing have focused on assigning complex tasks to multiple workers such that these workers can cover the skill needs of tasks while ignoring the additional travel cost of the workers. However, in reality, if workers want to complete the assigned task, they also need to move to the location of the task, which will incur travel costs. Then workers are only willing to accept the travel cost within a certain budget so that their actual earnings are fair. Take Fig. \ref{fig} for example, there are three workers with different skills (denoted by different colors), and two task requesters that require appropriate skills to complete (the required skills are also denoted by the different colors). The fixed range constraints of the task are shown with solid lines, and the extra range constraints of the task are shown with dashed lines. We can see, worker $w_2$ can well meet the skill requirement of task requester $t_2$ and does not exceed the distance limit. However, if task requester $t_1$ wants to find a worker to complete his task, no worker can answer it, because the worker (i.e., $w_1$) who meets the skill requirement exceeds his distance limit, while the skill of the worker (i.e., $w_3$) who within the distance limit do not meet the requirements. It results that task requester $t_1$ may never be assigned. In practice, task requester $t_1$ may have an extra budget to subsidize the extra travel cost for worker $w_1$ so that the task can be completed, which can also facilitate a fair online recommendation.

    \begin{figure}[h]
    \centering
    \includegraphics[width=0.5\textwidth]{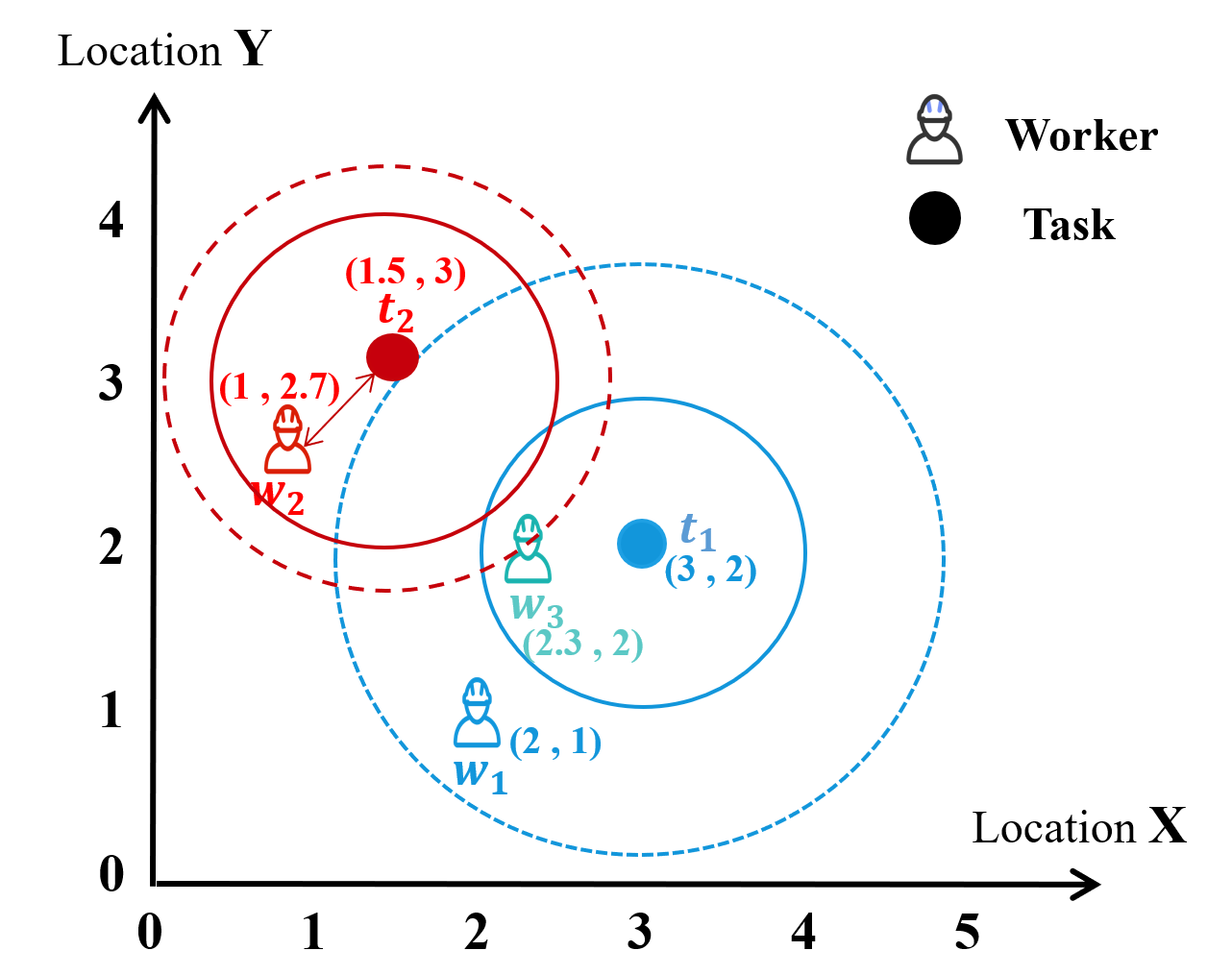}
    \caption{Example of distance constraint in STAEB problem.} \label{fig}
    \end{figure}

	Therefore in this paper, we investigate the task assignment of SC under such a problem setting, namely Skilled Task Assignment with Extra Budget (STAEB). To be more specific, given a set of workers and a set of tasks, it aims to assign multi-skilled workers to complex tasks with extra budget.

	To summarize, we make the following contributions to this paper:
	\begin{itemize}
		\item
		We formally define the Skilled Task Assignment with Extra Budget (STAEB) problem and prove it is NP-complete.
		\item
		We propose two batch-based approximation algorithms to solve the STAEB problem, i.e.,
		greedy and game-theoretic approaches.
		\item
		We conduct extensive experiments on real and synthetic datasets to prove the effectiveness and efficiency of our algorithms.
	\end{itemize}
	
	The rest of this paper is organized as follows. Section \ref{Sect.2} reviews some related work. The STAEB problem is formally defined in Section \ref{Sect.3}. Section \ref{Sect.4} gives the greedy algorithm. The game-theoretic algorithm is proposed in section \ref{Sect.5}. Extensive experiments on real and synthetic datasets are presented in section \ref{Sect.6}.  Finally, section \ref{Sect.7} concludes this work.

\section{Related Work}
\label{Sect.2}
Spatial crowdsourcing is an important topic in match-based services, which has attracted much attention from scholars in the mobile Internet and sharing economy area. Among them, task assignment is one of the most important areas in spatial crowdsourcing and can be classified into two categories~\cite{tong2020spatial}: Matching and Planning. In the matching model, task assignment is often formulated as a bipartite graph-based problem. Workers and tasks can be represented by the vertices in the bipartite graph, and utility or cost between a worker and a task can be denoted by the weight of the edges. Then, the problem aims to obtain an optimal matching in the bipartite graph. In the planning model (a.k.a. scheduling model), task assignment aims to plan a route for each worker to perform a sequence of tasks. In this section, we introduce the related work about spatial crowdsourcing based on the above two categories.\par
\subsection{Matching Model}
Usually, the matching problem can be categorized as utility maximization \cite{tong2020spatial, kazemi2012geocrowd,waissi1994network}, cost minimization \cite{burkard2012assignment,derigs1981shortest,leong2008capacity}, and stable matching \cite{corral2000closest,wong2007efficient} according to different objectives in existing studies.\par

The objective of utility maximization is equivalent to either maximizing the total number of assignments or the total payoff \cite{tong2020spatial}. Since task matching can be formulated as a bipartite graph matching problem, so the exact algorithm in the bipartite matching problem (e.g., Hungarian algorithm \cite{burkard2012assignment}) can optimally solve this problem. Kazemi et al. \cite{kazemi2012geocrowd} obtain the exact result by reducing the graph into an instance of the maximum flow problem \cite{waissi1994network}, and using the Ford-Fulkerson algorithm \cite{ford1956maximal}. Besides, various greedy-based algorithms are proposed to reduce the computation of the exact algorithm (e.g., Hungarian \cite{burkard2012assignment}). To et al. \cite{to2016real} consider greedy matching based on the priority of each task in this problem. Specifically, the priority is to expand by the idea of location entropy \cite{kazemi2012geocrowd} to region entropy, i.e., tasks with fewer workers inside should have a higher priority to be assigned. \par

Since each worker-task assignment has a budget, finding a match with minimum cost is also important in task assignment. Hungarian algorithm \cite{burkard2012assignment} and successive shortest path algorithm (SSPA) \cite{derigs1981shortest} can get the exact result of this problem. Besides, Hou et al. \cite{leong2008capacity} leverage indexing (R-tree indexing) and I/O optimization techniques to improve efficiency.\par 
Stable matching tries to integrate the preferences of either workers or tasks into their optimization objectives. The intuitive solution is to iteratively select the closest pair from the remaining tasks and workers \cite{corral2000closest}.  To improve efficiency, Wong et al. \cite{wong2007efficient} reduce the concept of “mutual nearest neighbor” to the bichromatic mutual NN search problem, and propose an NN search-based chain algorithm.\par

As discussed above, the matching model is more like a bipartite graph-based problem, which only assigns one worker for each task. However, in our problem, a task may need multiple different skilled workers to be accomplished, so those studies are not suitable for our problem.

\subsection{Planning Model}
The planning model in task assignment is to plan a route (i.e., a sequence of tasks) for each worker. Food delivery and ride-sharing are both planning problems in real applications. We can further divide the model into two categories according to the task complexity (i.e., whether the task requires the cooperation of multiple workers) in the planning model: single-worker planning \cite{zhao2020predictive,to2016real,chen2020fair,liu2020budget} and multi-workers planning \cite{he2014toward,to2015server,cheng2016taskmulti}.
\subsubsection{Single-worker planning}
Most of the previous works mainly focus on assigning the task to a single worker \cite{zhao2020predictive,to2016real,chen2020fair,liu2020budget,tong2019two,zheng2020Online}. Kazemi et al.~\cite{kazemi2012geocrowd} reduce the matching to the maximum flow problem~\cite{ahuja1988network}, and use Ford-Fulkerson algorithm~\cite{ford1956maximal} to maximize the number of assigned tasks. Den et al.~\cite{deng2013maximizing} firstly study maximizing the number of performed tasks under travel budget and deadline constraints, also they prove its NP-hardness. To address the problem, they propose an exact algorithm based on dynamic programming and several approximation algorithms based on the greedy heuristic. The greedy-based heuristic algorithms include the nearest-neighbor heuristic (NNH), most promising heuristic (MPH), and least expiration time heuristic (LEH). To better achieve the trade-off between efficiency and effectiveness, the beam search heuristic (BSH) is proposed~\cite{deng2016task}, which extends the base of the candidate set to a given threshold in the NNH. Tong et al.~\cite{tong2016online} first consider the online scenario of task assignment and propose threshold-based algorithms with theoretical guarantees to maximize the total utility of the assignment. Song et al.~\cite{song2017trichromatic} present trichromatic online matching in real-time spatial crowdsourcing which contains three entities of workers, tasks, and workplaces. Cheng et al.~\cite{cheng2020real} propose a cross-online matching that enables the platform to borrow some unoccupied workers from other platforms. Zhao et al.~\cite{peng2020user} first consider reducing the average waiting time of users, and many complete tasks so as to improve the user experience.\par
\subsubsection{Multi-workers planning}
In practice, there are some complex tasks that require groups of workers to conduct, for one worker usually a worker does not have all the skills. Then, some researchers focus on multi-worker planning (i.e. multiple workers collaborate on a task). The planning modes are to maximize general utility (e.g., satisfaction scores~\cite{gao2017team,she2015utility}, payoffs~\cite{he2014toward,asghari2016price,tao2018multi,zheng2018order}, distance~\cite{huang2013large,ma2013t,asghari2017line}).\par
In detail, Shin et al.~\cite{he2014toward} propose a local ratio-based algorithm, to maximize the reward of the performed tasks. To et al.~\cite{to2015server} formally define the maximum task assignment (MTA) problem in spatial crowdsourcing and propose alternative solutions to address it. Then, considering that different types of tasks may require workers with different skill sets, they first extend the MTA by the skills of the workers, which namely the maximum score assignment (MSA) problem. Cheng et al.~\cite{cheng2016taskmulti} propose the multi-skill spatial crowdsourcing (MS-SC) problem, which finds an optimal worker-and-task assignment strategy. MS-SC is proven NP-hard. Therefore, they propose three effective heuristic approaches, including greedy, g-divide-and-conquer, and cost-model-based adaptive algorithms to solve it. She et al.~\cite{she2015utility} also propose a greedy algorithm based on RatioGreedy, which considers the utility-cost ratio of each worker-task pair and adds the pair with the largest ratio to the planning. Gao et al.~\cite{gao2017team} consider both skill and time constraints in order to maximize total satisfaction. They first form a set of workers with the lowest base to meet the skill requirements of the task, and then greedily assign the highest satisfaction workers to the task. Gao et al.~\cite{gao2017Top} recommend top-k groups of workers to the task and choose one worker in each group to lead the group. Cheng et al.~\cite{cheng2019cooperation} consider the cooperative relationship of multiple workers, to maximize the total cooperative quality income of tasks. The online algorithm for real-time spatial crowdsourcing is first developed by Song et al.~\cite{song2019Multi}. The Online-Exact algorithm always computes the optimal assignment for the newly appearing tasks or workers and the Online-Greedy algorithm is for fast computing task assignment. Li et al.~\cite{Li2020Group} propose group task assignments considering group preference for tasks. Ni et al.~\cite{ni2020Task} consider the tasks may have some dependencies among them. That is one task can only be dispatched when its dependent tasks have been assigned.\par 
However, these methods do not consider the urgent need for tasks with extra budget (as depicted in Fig. 1), therefore some tasks may need to wait for a long time and even may never get served. This is actually unfair for these tasks and causes a bad experience for platform customers. In conclusion, all of these existing papers mainly focus on task-worker matching, while ignoring the fairness issues that exist (e.g., workers far from the task). In this paper, we take the extra budget of tasks into consideration and propose fairness-driven task assignment algorithms to efficiently and effectively solve the problems.

\section{Problem Definitions}
\label{Sect.3}
	In this section, we formally define the Skilled Task Assignment with Extra Budget (STAEB) problem.
	
	Assume that $S=\{s_1,s_2...s_k\}$ is a set of $k$ skills. Each worker has one or multiple skills in $S$ and each task needs one or multiple skills in $S$. Different skills require task requesters to pay different fees $p_s$.
	
	\begin{myDef}
		\label{def1}
		(Task) A task, denoted by $t=<l_t,a_t,r_t,b_t,S_t>$, is released on the platform with location $l_t$ in the 2D space at time $a_t$. $r_t$ is the radius of $t$ which is the fixed range constraint of $t$, $b_t$ is its provided extra budget and $t$ needs skills $S_t \subseteq S$.
	\end{myDef}

	\begin{myDef}
		\label{def2}
		(Worker) A worker, denoted by $w=<l_w,a_w,S_w>$, appears on the platform at time $a_w$ and at location $l_w$ in the 2D space. $w$ has skills $S_w \subseteq S$.
	\end{myDef}

	\begin{myDef}
		\label{def3}
		(Travel cost) The travel cost, denoted by $cost(t,w)$, is determined by the travel distance from $l_w$ to $l_t$.
	\end{myDef}
	
	Travel distance can be measured by any type of distance such as Euclidean distance or road network distance. In this paper, we use Euclidean distance as the travel distance and take it as the travel cost directly for simplicity.
	\begin{myDef}
		\label{def4}
		(Extra travel cost) Extra travel cost, denoted by $e_t$, is the actual travel cost exceeding the fixed range constraint, i.e., $e_t = cost(t,w) - r_t$.
	\end{myDef}
	
	\begin{myDef}
		\label{def5}
		(Valid worker set) Each task needs to be completed by multiple workers. The set of multiple workers who can complete the task is called a valid worker set and the skill set of workers in a valid worker set can cover the skills that the task requires. Each skill requirement of the task only needs to be matched by one worker, but skills may be duplicated between workers. Therefore, each worker in a valid worker set must have at least one skill that the task needs but other workers do not have. The valid worker set, denoted by $W_{v}(t)$, is a set of workers that satisfy the following three conditions:
		
		(1)Each skill of the task can be covered by the skills in the set, i.e., $s \in \bigcup_{w\in W_{v}(t)}S_{w}, \forall s \in S_t$,
		
		(2)The total extra travel cost of workers in the set should be less than the extra budget of the task, i.e., $\sum_{w \in W_{v}(t)} e(t,w) \leq b_t$.
	\end{myDef}

	\begin{myDef}
		\label{def6}
		(Platform revenue) Given a task $t$ and a worker $w$, the platform revenue is denoted as:
		\begin{equation}
			p_{(t,w)}=\left\{
			\begin{aligned}
				\alpha \sum_{s \in S_w \cap S_t} p_s & , & cost(t,w)<r_t, \\
				\alpha \sum_{s \in S_w \cap S_t} p_s - \beta e(t,w) & , & cost(t,w) \ge r_t.
			\end{aligned}
			\right.
		\end{equation}
	\end{myDef}
where $\sum_ {s \in S_w \cap S_t} p_s$ represents the total revenue of providing skills the task needs by the worker. $\alpha (0 < \alpha < 1) $ is the parameter that controls the platform's income from the skill fee. We assume that the platform skill income is proportional to the skill fee. Workers are only willing to accept a certain range of travel cost. Therefore, the task and the platform will subsidize workers' extra travel cost. $\beta (0<\beta<1)$ is the parameter that controls the subsidy of the platform for the workers' extra travel cost.

When the travel cost of the worker to the task is less than the fixed range constraint of the task, the platform income is equal to the bonus of skill revenue. When the travel cost of the worker to the task is greater than or equal to the fixed range constraint of the task, the platform needs to subsidize the worker for the travel cost over the fixed range constraint of the task. Therefore, the platform revenue is equal to skill revenue minus extra travel costs. The platform revenue from the task is equal to the revenue of the workers assigned to the task, denoted as:

\begin{equation}
	P_{<t,W_{v}(t)>}=\sum_{w \in W_{v}(t)} p_{(t,w)}
\end{equation}

\begin{myDef}
	\label{def7}
	(Skilled Task Assignment with Extra Budget (STAEB) problem) Given a set of tasks $T$ and set workers $W$. A feasible matching result, denoted by $M$, consists of a set of $<$ \textit{task}, \textit{valid worker set} $>$ in the form of $<t_1,W_{v}(t_1)>, <t_2,W_{v}(t_2)>,...,<t_{\lvert T \rvert},W_{v}(t_{\lvert T \rvert})>$, where $\cap _{i=1}^{\lvert T \rvert} W_{v}(t_i) = \emptyset$. Our problem is to find a feasible matching result $M$ that achieves the goal of maximizing total platform revenue fairly $P_M=\sum_{<t, W_{v}(t)> \in M} P_{<t, W_{v}(t)>}$.
		
\end{myDef}

\begin{theorem}
	The Skilled Task Assignment with Extra Budget problem is an NP-complete problem.
\end{theorem}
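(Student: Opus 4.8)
The plan is to prove NP-completeness for the decision version of STAEB: given an instance together with a threshold $K$, decide whether there exists a feasible matching $M$ with $P_M \geq K$. Membership in NP is routine, so the real work is an NP-hardness reduction. First I would argue that a feasible matching $M$ itself is a polynomial-size certificate: given $M$, one checks in polynomial time that the assigned worker sets are pairwise disjoint, that each $W_v(t)$ meets the conditions of Definition \ref{def5} (every skill in $S_t$ is covered, the total extra travel cost is at most $b_t$, and each assigned worker supplies a skill no other member of the set supplies), and finally sums the per-pair values $p_{(t,w)}$ to test $P_M \geq K$. Each check is linear in the input, so STAEB is in NP.

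For hardness I would reduce from \textsc{Set Cover}, which is NP-complete and whose covering structure mirrors the skill-coverage requirement. Given a \textsc{Set Cover} instance with universe $U=\{u_1,\dots,u_n\}$, subsets $C_1,\dots,C_m \subseteq U$, and bound $k$, I build a STAEB instance with skill set $S=U$, a single task $t$ at the origin with $S_t=U$, radius $r_t=1$, and extra budget $b_t=k$. For each $C_j$ I create a worker $w_j$ with $S_{w_j}=C_j$, placed on the circle of radius $r_t+1$ about $t$, so that $cost(t,w_j)=r_t+1$ and hence $e(t,w_j)=1$. Every skill receives the same fee $p_s=P$ for a large constant $P$. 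The crucial correspondence is that, since each worker has extra travel cost exactly $1$ and $b_t=k$, the budget condition forces any valid worker set to contain at most $k$ workers; together with the coverage condition this makes valid worker sets precisely the set covers of $U$ using at most $k$ of the $C_j$, with a minimal subcover witnessing the non-redundancy condition.

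It remains to set the threshold. Because every worker lies outside $r_t$, we have $p_{(t,w_j)}=\alpha P\,|C_j|-\beta$, so the revenue of any valid worker set is $\alpha P \sum_j |C_j| - \beta\,|W_v(t)| \geq \alpha P n - \beta k$, using $\sum_j |C_j| \geq n$ for a cover and $|W_v(t)| \leq k$. I would set $K=\alpha P n-\beta k$ and choose $P$ large enough that $K>0$. Then the maximum attainable revenue is at least $K$ exactly when some valid worker set exists, which happens exactly when the \textsc{Set Cover} instance admits a cover of size at most $k$; if no such cover exists there is no valid worker set, the only feasible matching is the empty one, and its revenue $0$ falls below $K$. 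The construction is clearly polynomial, so combining it with NP membership yields NP-completeness.

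The hard part will be handling the fact that STAEB is a maximization problem in which the empty matching is always feasible, so the hardness cannot come from feasibility alone and must be injected through the revenue threshold. The key insight that makes this work is that the budget constraint with uniform extra costs converts the cardinality bound of \textsc{Set Cover} into the budget, and that the double counting of overlapping skills in the revenue formula only \emph{inflates} the objective; hence a threshold equal to the guaranteed minimum revenue of any valid cover still cleanly separates the ``$k$-cover exists'' case from the ``no cover'' case. The remaining care lies in scaling the fees so that $K>0$ and in verifying that the non-redundancy condition of Definition \ref{def5} never obstructs the correspondence between valid worker sets and covers.
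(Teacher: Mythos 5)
Your proof is correct, but it takes a genuinely different route from the paper's. The paper argues by mapping STAEB \emph{to} the maximum weighted independent set problem: it builds a vertex for each candidate valid worker set $W_v(t)$, connects conflicting candidates, and observes that feasible assignments correspond to independent sets. You instead give a Karp reduction in the opposite direction, \emph{from} \textsc{Set Cover} to a decision version of STAEB, together with an explicit NP-membership argument. The directions matter: a reduction of STAEB to an NP-complete problem establishes only an upper bound on its difficulty (and, as a side issue, the paper's construction may enumerate exponentially many candidate sets $W_v(t)$), whereas your reduction from a known NP-complete problem is the standard, logically complete way to establish hardness, and your framing via a threshold $K$ is what makes the ``NP-complete'' claim well posed for a maximization problem. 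What the paper's view buys is structural insight -- the conflict-graph picture motivates the greedy and game-theoretic heuristics -- while your argument buys rigor. The details of your construction check out: uniform extra cost $1$ per worker turns the budget $b_t=k$ into the cardinality bound of \textsc{Set Cover}, passing to a minimal subcover discharges the non-redundancy condition of Definition~\ref{def5}, and the threshold $K=\alpha P n-\beta k$ (with $P$ large enough that $K>0$) cleanly separates the yes- and no-instances because skill overlaps can only inflate the revenue above $K$ while the empty matching earns $0$.
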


\begin{proof}
	We prove that the Skilled Task Assignment with Extra Budget problem is NP-complete by reducing it to a maximum weighted independent set problem. In our problem, we can first initialize each available $W_v(t)$ of all $t \in T$ to be a vertex in the graph. Then let all $W_v(t)$ vertices belonging to the same $t$ be connected one by one. Finally, we connect the $W_v(t)$ vertices that have intersections (i.e., have the same workers). At this point, any set of $W_v(t)$  vertices that can be assigned to T must be an independent set (otherwise there would be at least one t that has more than one $W_v(t)$, or two $W_v(t)$ with the same worker). Since there is revenue for each $W_v(t)$, the revenue of the platform is also the sum of the vertex weights on the independent subset, the STAEB problem is reduced successfully to the maximum-weighted independent set problem. Because the maximum weight-independent subset problem is an NP-complete problem, the STAEB problem is also an NP-complete problem.
\end{proof}

\section{Greedy algorithm}
\label{Sect.4}
In this section, we propose a greedy method \cite{ni2020Task}. The main idea of the algorithm is to sort tasks in descending order of average fee of skills and then find the fewest workers to cover the skills required by the task.	
The fees for different skills are different, so workers will get higher revenue if they complete skills with a higher fee. If a worker can meet multiple skills needed for a task, it will save travel costs. Therefore, for each task, the smallest set of valid workers is preferred.

	\begin{algorithm}[htbp]
		\caption{Greedy algorithm} 
	      \hspace*{0.02in} {\bf Input:} 
		A set of tasks $T$, a set of workers $W$\\
		\hspace*{0.02in} {\bf Output:} 
		the matched pair set $M$
		\begin{algorithmic}[1]
			\State $M\leftarrow \emptyset$, $W^{'} \leftarrow W$;
			\State Sort the tasks in descending order according to the size of $\frac{\sum_{s \in S_t} p_s}{\lvert number\ of\ task\ skills\rvert}$; 
			\For{each task $t\in T$}
				\State $W_{v}(t) \leftarrow \emptyset$;
				\While{$S_t \neq \emptyset$}
					\State $R_t=r_t+b_t$;
					\State $W^*= \{ w \vert S_w \cap S_t \neq \emptyset and cost(t,w)\leq R_t \}$;
					\If{$W^{*} == \emptyset$}
						\State $W \leftarrow W+W_{v}(t)$;
						\State $W_{v}(t) \leftarrow \emptyset$;
						\State Break;
					\EndIf
					\State $w= argmax_{w \in W^*}\lvert S_w \cap S_t \rvert $;
					\State $W_{v}(t)=W_{v}(t) \cup \{w\}$;
					\State $b_t=b_t-e(t,w)$;
					\State $S_t=S_t-\{S_w \cap S_t\}$;
					\State $W=W-\{w\}$;				
				\EndWhile
				\If{$W_{v}(t) \neq \emptyset$}
					\State $M \leftarrow M \cup \{<t,W_{v}(t)>\}$;
				\EndIf		
			\EndFor
			\State \Return $M$;
		\end{algorithmic}
	\end{algorithm}

Algorithm 1 presents the detailed steps of the Greedy algorithm. In line 1, the matched pair set is initialized to empty and the initial unmatched worker set is set to the inputs of the worker. In line 2, the tasks are sorted in descending order according to the average fee of skills. The valid worker set of each task is set to empty in lines 3-4. When the unmatched skills of tasks are not empty, we will assign workers as follows: in lines 6-7, the total range constraint of the task is a fixed range constraint plus an extra range constraint. $W^*$ is the set of workers who can complete the task. If the set of workers who can complete the task is empty, it means that no worker can complete the remaining skills of the task, so the task cannot be completed and the workers in the valid worker set $W_v(t)$ are added to the worker set $W^{'}$. Then in lines 13-17, we select the worker $w$ in $W^{*}$ who has the most skills required by the task and add $w$ to the valid worker set of the task. We update the extra range constraint and the skill set required by the task and remove the assigned workers from the worker set.  In lines 19-20, when the valid worker set is not empty and the skill set is empty, it indicates that the valid worker set can cover the skills of the task. We can add this task and its valid worker set to the matched pair set and remove the task from the task set. Finally, we return the matched pair set in line 23.

\textbf{Complexity Analysis.} The time complexity of sorting tasks is $O(\vert S_t \vert \cdot \vert T\vert + \vert T \vert \cdot \ln \vert T\vert)$ and finding valid worker sets for each task is $O(\vert T \vert \cdot \vert S_t \vert^2 \cdot \vert W \vert)$, so the total time complexity is $O(\vert S_t\vert\cdot \vert T\vert + \vert T\vert\cdot\ln \vert T\vert+\vert T\vert\cdot \vert S_t\vert^2 \cdot \vert W\vert)$.

\section{Game-Theoretic Algorithms}
\label{Sect.5}
The greedy algorithm can find the results effectively. However, it assumes that workers will complete the tasks according to the assignment of the platform, and the competition between different workers is ignored. The fundamental nature of the STAEB problem is that each task needs to be completed by multiple workers who satisfy the need for skills, which means that the choice of workers is affected by the decisions made by other workers. Each worker wants to choose a task with high revenue. This interdependent decision can be modeled through game theory, in which workers can be regarded as independent players participating in the game. Thus, the STAEB problem can be formalized as a multi-player game. There have been many game theory model studies~\cite{Fudenberg1992Game,Myerson1997Game,Rasmusen2006Games}. Based on the existing studies in game theoretic models, we propose a game theory method to find the valid worker sets to the task until the Nash equilibrium is satisfied in this section. 

\subsection{Game Formulation}
Our STAEB problem can be formulated as an $n-$player strategic
game $\mathcal{G}=<W,\mathbb{Y},\mathbb{U}>$, which consists of players, the overall strategies, and utility functions. It is formulated as follows:

(1) $W=\{w_1,w_2,...,w_n\}(n \ge 2)$ is a limited set of workers as game players. In the rest of
the paper, we will use player and worker interchangeably.

(2) $\mathbb{Y}=\cup _{i=1}^{n} Y_i$ is the overall strategies for the players. $Y_i$ is the finite
set of strategies that worker $w_i$ can choose.

(3) $\mathbb{U}=\cup _{i=1}^{n} U_i$ denotes the utility functions of all the players. The value of $U_i$ depends on the player $w_i$ and other players' strategies. $U_i:\mathbb{Y} \rightarrow \mathbb{P}$ is the utility function of player $w_i$. For every joint strategy $\vec{y} \in \mathbb{Y}$, $U_i(\vec{y}) \in \mathbb{P}$ represents the utility of player $w_i$, which can be calculated as follows:

\begin{equation}
	U_{i}(\overrightarrow{y})=p_{(t,w_i)}-p_{(t_0,w_i)}
\end{equation}
where $p_{(t,w_i)}$ is the platform revenue of assigning player $w_i$ to task $t$. $p_{(t_0,w_i)}$ is the as platform revenue of assigning player $w_i$ to task $t_0$.

Let $y_i$ be the strategy of player $w_i$ in the joint strategy $Y_i$ and $y_{-i}$ be all other players’ joint strategies except for player $w_i$. A strategic game has a pure Nash equilibrium $Y^* \in \mathbb{Y}$ if and only if for every player $w_i \in W$ satisfies the following conditions:
\begin{equation}
	U_i({y_i}^*,{y_{-i}}^*) \ge U_i(y_i,{y_{-i}}^*),\forall y_i \in Y_i, y_{i}^{*} \in Y^{*}
\end{equation}

In our problem, since each worker needs to have a deterministic strategy, i.e., selecting a task or doing nothing. We only consider deterministic strategies, which means that the probability of
a multiplayer worker $w_i$ can choose from $Y_i$ is 1, while the probabilities of the remaining strategies in $Y_i$ are 0. Then, we prove that the STAEB game is an Exact Potential Game (EPG) \cite{1996Potential} which has at least one pure Nash Equilibrium.

In a Nash equilibrium, no player can improve his utility by unilaterally changing his strategy when other players insist on their current strategies. We first introduce the theory of the Exact Potential Game.
\begin{myDef}[Exact Potential Game]
	\label{def8}
	A strategic game $ \mathcal{G}=<W,\mathbb{Y},\mathbb{U}>$ is called an exact potential game if and only if there exists a
	potential function $\phi:\mathbb{Y} \rightarrow \mathbb{P}$, such that for all $\vec{y} \in \mathbb{Y}$ and $ w_{i} \in W $:
	\begin{equation}
		U_i(y_i,y_{-i}) - U_i({y_i}',y_{-i})=\phi(y_i,y_{-i}) - \phi({y_i}',y_{-i}), \forall y_i,{y_i}' \in Y_i
	\end{equation}
\end{myDef}
where $y_i$ and ${y_i}'$ are the strategies that worker $w_i$ can choose, $y_{-i}$ is the joint strategy of other workers except for worker $w_i$. 
\begin{theorem}
	Our STAEB problem is an Exact Potential Game(EPG).
\end{theorem}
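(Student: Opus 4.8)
The plan is to exhibit an explicit potential function and then verify Definition~\ref{def8} by a direct computation. The natural candidate is the total platform revenue itself: for a joint strategy $\vec{y} \in \mathbb{Y}$ in which each worker $w_i$ selects task $t(y_i)$ (or the null do-nothing option), I would set
\begin{equation}
    \phi(\vec{y}) = \sum_{i=1}^{n} p_{(t(y_i),w_i)},
\end{equation}
which coincides with $P_M=\sum_{<t,W_{v}(t)>\in M} P_{<t,W_{v}(t)>}$ by the additive decomposition $P_{<t,W_{v}(t)>}=\sum_{w\in W_{v}(t)} p_{(t,w)}$ from Definition~\ref{def6}. The intuition is that each player's utility in Eq.~(3) is designed to capture exactly that player's marginal contribution to this global objective, so the social welfare ought to serve as an exact potential.

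First I would fix an arbitrary player $w_i$ together with two of its strategies $y_i,{y_i}'$, while holding the joint strategy $y_{-i}$ of all other players fixed. Writing $t$ for the task selected under $y_i$ and $t'$ for the task selected under ${y_i}'$, the left-hand side of the exact-potential condition becomes
\begin{equation}
    U_i(y_i,y_{-i}) - U_i({y_i}',y_{-i}) = \bigl(p_{(t,w_i)} - p_{(t_0,w_i)}\bigr) - \bigl(p_{(t',w_i)} - p_{(t_0,w_i)}\bigr) = p_{(t,w_i)} - p_{(t',w_i)},
\end{equation}
since the baseline term $p_{(t_0,w_i)}$ is common to both utilities and cancels.

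Next I would evaluate the right-hand side. Because $\phi$ is a sum over all players and only the single summand belonging to $w_i$ changes when $w_i$ deviates — every other summand $p_{(t(y_j),w_j)}$ with $j\neq i$ depends only on $y_{-i}$, which is unchanged — the difference telescopes to the same quantity, $\phi(y_i,y_{-i}) - \phi({y_i}',y_{-i}) = p_{(t,w_i)} - p_{(t',w_i)}$. Comparing the two displays yields equality for every player and every pair of strategies, which is precisely the condition of Definition~\ref{def8}; the existence of at least one pure Nash equilibrium then follows from the standard theory of potential games~\cite{1996Potential}.

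The step I expect to require the most care is justifying the additive separability that makes the right-hand side telescope, namely that a unilateral deviation by $w_i$ perturbs only $w_i$'s own revenue term and leaves every other worker's contribution $p_{(t(y_j),w_j)}$ untouched. This is immediate from the per-worker revenue formula of Definition~\ref{def6}, but I would need to confirm that the skill-coverage and extra-budget feasibility conditions of Definition~\ref{def5} do not introduce cross-worker coupling, for instance one worker's departure invalidating another worker's valid set in an all-or-nothing fashion. The cleanest resolution is to regard $p_{(t,w_i)}$ as the revenue worker $w_i$ contributes to its chosen task independently of whether the resulting set is ultimately valid, so that infeasible joint configurations merely carry lower potential and are never selected at equilibrium; under this convention the decomposition is exactly additive and the argument above goes through.
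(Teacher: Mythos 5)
Your proposal is correct and takes essentially the same route as the paper: both use the total platform revenue as the exact potential and reduce the potential difference to $p_{(t,w_i)}-p_{(t',w_i)}$ via the additive decomposition $P_{<t,W_{v}(t)>}=\sum_{w\in W_{v}(t)}p_{(t,w)}$ and cancellation of the baseline $p_{(t_0,w_i)}$; the only cosmetic difference is that you index the sum by workers while the paper indexes it by tasks and telescopes the two affected task terms. Your closing remark about extending $p_{(t,w)}$ to sets that may fail the validity conditions of Definition~\ref{def5} makes explicit a convention the paper uses silently, which is a small but genuine improvement in rigor.
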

\begin{proof}
The total platform revenue of all tasks in $T$ is represented by the potential function $\phi(y) = \sum_{t \in T}{P_{<t, W_{v}(t)>}}$.  Worker $w_i$ can choose between $y_i$ and ${y_i}'$ strategies, while $y_{-i}$ is the strategy of all other workers except worker $w_i$. The task chosen in strategies $y_i$' is denoted by $t_j$ and $t_k$. Then we have:
	\begin{equation}
		\begin{aligned}
			&\phi(y_i,y_{-i}) - \phi({y_i}',y_{-i})\\	
			&=p_{<t_j,W_{v}(t_j)>}+p_{<t_k,W_{v}(t_k)-w_i>}+\sum_{t \in T-t,_j-t_k} p_{<t,W_{v}(t)>}-(p_{<t_k,W_{v}(t_k)>}\\ 
			&+p_{<t_j,W_{v}(t_j)-w_i>}+\sum_{t \in T-t_j-t_k}p_{<t,W_{v}(t)>})\\
			&=p_{<t_j,W_{v}(t_j)>}+p_{<t_k,W_{v}(t_k)-w_i>}-(p_{<t_k,W_{v}(t_k)>}+p_{<t_j,W_{v}(t_j)-w_i>})\\					&=p_{<t_j,W_{v}(t_j)>}-p_{<t_j,W_{v}(t_j)-w_i>}-(p_{<t_k,W_{v}(t_k)>}-p_{<t_k,W_{v}(t_k)-w_i>})\\
			&=p_{(t_j,w_i)}-p_{(t_k,w_i)}\\	
			&=(p_{(t_j,w_i)}-p_{(t_0,w_i)})-(p_{(t_k,w_i)}-p_{(t_0,w_i)})\\
			&=U_i(y_i,y_{-i}) - U_i({y_i}',y_{-i})
		\end{aligned}	
	\end{equation}
	Thus, the strategic game of the STAEB problem is an exact potential game, according to the definition.
\end{proof}

\subsection{The Game Theoretic Approach}
Because our STAEB game has pure Nash equilibrium, we propose an Extra Budget-aware Game-Theoretic method (EBGT) based on the best response framework to find the Nash equilibrium joint strategy of the strategic game $\mathcal{G}$. In this algorithm, each worker is assigned to his/her "best" task, so as to obtain a higher total platform revenue. The details of game theory method are described as two steps in algorithm 2. 

\begin{algorithm}[htbp]
	\caption{Extra Budget-aware Game-Theoretic (EBGT) Approach} 
	\hspace*{0.02in} {\bf Input:} 
	A set of tasks $T$, a set of workers $W$\\
	\hspace*{0.02in} {\bf Output:} 
	the matched pair set $M$
	\begin{algorithmic}[1]
            \State \textbf{// Step 1: Initialize the strategy.} 
		\State $M \leftarrow \emptyset$, $T' \leftarrow T$, $W' \leftarrow W$;	
		\For{each task $t\in T$}
			\State $W_{v}(t) \leftarrow \emptyset$;
			\For{each task skill $s\in S_t$}
				\State $R_t=b_t$;
				\State $W^*=\{w \vert s\in S_w$ and $cost(t,w)\leq R_t\}$;
				\If{$W^*= \emptyset$}
					\State $W' \leftarrow W'+W_{v}(t)$;
					\State $W_{v}(t) \leftarrow \emptyset$;
					\State Break;
				
				\EndIf
				\State $w=argmax_{w \in W^*} {P_{(t,w)}}$;
				\State $W_{v}(t)=W_{v}(t) \cup \{w\}$, $b_t=b_t-e(t,w)$\, $W'=W'-\{w\}$;
			
			\EndFor
			\If{$W_{v}(t) \neq \emptyset$}
				\State $M \leftarrow M \cup \{<t,W_{v}(t)>\}$, $T'=T'-\{t\}$;
			\EndIf
		
		\EndFor
            \State \textbf{// Step 2: Find the Nash equilibrium.} 
		\State \textbf{repeat}
		\State $M' \leftarrow M$;
		\For{each worker $w\in W$}
			\State find the best-response task $t^*$ for $w_i$; 
			\If{$t^*$ not exists}
				\State Continue;			
			\Else			
				\If{$t^* \in T'$}
					\State obtain $W_{v}(t^*)$ containing $w_i$;
					\State $M \leftarrow M \cup \{<t^*,W_{v}(t^*)>\}$, $T'=T'-\{t^*\}$;			
				\Else				
					\State $W'\leftarrow W'+W_{v}(t^*)$;
					\State $M \leftarrow M - \{<t^*,W_{v}(t^*)>\}$;
					\State $W_{v}(t^*) \leftarrow \emptyset$;
					\State obtain $W_{v}(t^*)$ containing $w_i$;
					\State $M \leftarrow M \cup \{<t^*,W_{v}(t^*)>\}$, $T'=T'-\{t^*\}$;
				\EndIf		
			\EndIf 
			\State compare $M$ and $M'$;	
		\EndFor
		\State \textbf{until} Nash equilibrium
		\State Update $M$;
		\State \Return $M$;		
	\end{algorithmic}
\end{algorithm}



\textit{Step 1: Initialize the strategy.} In line 2, the matching set is initialized to empty, and the initial unmatched task and worker set are set to the inputs of task and worker, respectively. In lines 3-4, the valid worker set of each task is set to empty. We traverse each skill of the task and find the workers who can complete the task in lines 5-12, if the set of workers who can complete the skill task is empty, it means that there is no worker who can complete the remaining skills of the task. Therefore, the task cannot be completed, and we restore the unmatched worker set. In lines 13-14, we select the workers with the highest platform revenue in $W^{*}$, and update the skill set required by the task and its extra range constraint. Meanwhile, we remove the assigned workers from the worker set. When the valid worker set can cover the skills of the task, we add the task and its valid worker set to the matched pair set in line 17.

\textit{Step 2: Find the Nash equilibrium.} The algorithm adjusts each worker’s strategy iteratively to get the best response strategy based on the current joint strategy of other workers until the Nash equilibrium is found in which no one will change his strategy. By this time, the platform revenue will be maximized. In each iteration, only one worker is allowed to choose the best game, and the game should be carried out in order. We record the matching results of the previous round in line 22. For each worker $w_i \in W$, we first find the best response task $t^{*}$ in line 24, which can be calculated as follows:
\begin{equation}
	\begin{aligned}
		&t^*=argmax_{t \in \{t \vert S_{w_i} \cap S_t \neq \emptyset\ and\ cost(t,w_i)  \le R_t\}} (p_{<t,W_{v}(t)>}-p_{<t,W_{v}(t)-\{w_i\}>}\\
		&-(p_{<t_0,W_{v}(t_0)>}-p_{<t_0,W_{v}(t_0)-\{w_i\}>}))
	\end{aligned}
\end{equation}

When there is no best response task for the worker based on the current task assignment, $w_i$ does not change his strategy. In lines 28-30, when $w$ has the best response task and the valid worker set of the best response task is empty, we obtain the valid worker set which includes $w_i$. We add $t^{*}$ and a valid worker set to the matched set. In lines 31-38, when the valid worker set of the best response
task is not empty, we clear the current valid worker set of the task and obtain the valid worker set which includes $w_i$. Then $t^{*}$ and a valid worker set to the matched set are added. Finally, we update M according to the Nash equilibrium.

\subsection{Analysis of the Game Theoretic Approach}
Since the STAEB problem is an exact potential game and the strategy set $S$ is limited, a Nash equilibrium can be reached after workers change strategies a limited number of rounds. For simplicity, we prove the upper bound of the total rounds required to achieve a pure Nash equilibrium by considering a scaled version of the problem. 
To verify the upper bound of the total rounds required to achieve a pure Nash equilibrium, we explore a scaled version of the issue. The objective function of the problem is an integer value. 
We suppose that a comparable game with potential function exists: $\phi_{\mathbb{Z}} (S) =d \cdot \phi (S)$, in which $d$
is a positive multiplicative factor such that $\phi_{\mathbb{Z}} (S) \in \mathbb{Z}$, $\forall S \in \mathbb{S}$. We show that the GT technique performs at most $\phi {\mathbb{Z}} (S^*)$ rounds using this scaled potential function, where $S^*$ is the optimal strategy that workers can select in this potential STAEB game. $\phi_{\mathbb{Z}} (S^*)$ is the product of the positive multiplier factor $d$ and the optimal value of the objective function $P_M$.

\begin{lemma} 
\label{lemma1}
The upper bound on the number of rounds to converge to a pure Nash equilibrium is $\phi_{\mathbb{Z}}(S^*)$  in each batch with the EBGT technique, where $S^*$ is the optimal joint strategy that workers can select in the potential STAEB game, and $\phi {\mathbb{Z}} (S^*) =d \cdot \phi (S^*)$ is a scaled potential function with only integer values.
\end{lemma}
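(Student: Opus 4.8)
The plan is to combine the exact-potential structure with the integrality of the objective. First I would invoke the exact-potential-game property established earlier: the function $\phi(y)=\sum_{t\in T}P_{<t,W_{v}(t)>}$ (the total platform revenue) satisfies $U_i(y_i,y_{-i})-U_i(y_i',y_{-i})=\phi(y_i,y_{-i})-\phi(y_i',y_{-i})$ for every worker $w_i$ and every pair $y_i,y_i'$ (Definition~\ref{def8}). In Step~2 of Algorithm~2 the dynamics let one worker at a time switch to its best-response task $t^{*}$, and by the best-response rule~(7) a switch is executed only when it strictly increases that worker's utility $U_i$. The potential identity then forces each accepted move to raise $\phi$ by exactly the same strictly positive amount; idle steps (the branch where no improving $t^{*}$ exists) leave the profile---and hence $\phi$---unchanged and are not counted.

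Next I would pass to the scaled potential $\phi_{\mathbb{Z}}=d\cdot\phi$, which is integer-valued by the choice of $d$. Since every counted best-response move strictly increases $\phi$, it increases $\phi_{\mathbb{Z}}$ by a positive integer, hence by at least $1$. Consequently the values of $\phi_{\mathbb{Z}}$ observed along the run form a strictly increasing integer sequence. Finally I would bound the range of that sequence: because $S^{*}$ maximizes the total revenue by definition, $\phi(S)\le\phi(S^{*})$ and therefore $\phi_{\mathbb{Z}}(S)\le\phi_{\mathbb{Z}}(S^{*})$ for every reachable profile $S$, while the potential is a sum of platform revenues bounded below by the zero value of the empty assignment, so the run starts at a value $\ge 0$. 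A strictly increasing integer sequence confined to $[0,\phi_{\mathbb{Z}}(S^{*})]$ admits at most $\phi_{\mathbb{Z}}(S^{*})$ increments, and each increment is one best-response round. Since no worker can improve once the dynamics stop---which happens because the game is an exact potential game and so possesses a pure Nash equilibrium---the number of rounds needed to reach that equilibrium within a batch is at most $\phi_{\mathbb{Z}}(S^{*})$, as claimed.

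The step I expect to be the main obstacle is justifying that the EPG identity still governs the potential change in the reconstruction branch of Algorithm~2, where $w_i$'s move is accompanied by clearing and rebuilding the valid worker set $W_{v}(t^{*})$. Strictly speaking the potential identity applies to a single unilateral deviation, whereas rebuilding $W_{v}(t^{*})$ may alter the membership of other workers as well; I would therefore need to argue either that this reconstruction is equivalent to a finite sequence of unilateral improving deviations, or directly that it never decreases $\phi_{\mathbb{Z}}$ while $w_i$'s own contribution strictly increases it. Pinning down this monotonicity is the crux; once it is secured, the integrality-plus-boundedness counting argument yields the stated bound immediately.
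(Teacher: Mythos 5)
Your argument is essentially the paper's own proof: invoke the exact potential game property, note that each best-response deviation raises the integer-valued scaled potential $\phi_{\mathbb{Z}}=d\cdot\phi$ by at least $1$, and bound the strictly increasing sequence by its maximum $\phi_{\mathbb{Z}}(S^*)$, giving at most $\phi_{\mathbb{Z}}(S^*)$ rounds. The obstacle you flag at the end --- that the reconstruction branch of Algorithm~2 rebuilds $W_{v}(t^*)$ and is therefore not literally a unilateral deviation --- is a genuine subtlety, but the paper's proof does not address it either and simply treats every round as a single improving unilateral move, so your proposal is no less complete than the published argument.
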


\begin{proof}
	The EBGT approach converges when no worker deviates from his current strategy, which means that there is at least one worker deviates from his current strategy in each round. Because $\phi_{\mathbb{Z}} (S) \in \mathbb{Z}$, each worker $w_i$ is changed from its current strategy ${s_i}'$ to a better strategy $s_i$, which will increase the scaled potential function by at least 1, i.e., $U_i (s_i,s_{-i})-U_i ({s_i}',s_{-i}) \geq 1$. Thus the upper bound of the number of rounds to converge to pure Nash equilibrium is the maximum value $\phi_{\mathbb{Z}} (S^*)$.
	
	The upper bound of the platform revenue of the task $t_j$ in our best joint strategy is:
	\begin{equation}
		\overline p_{t_j} = \sum_{w_i \in W_{v}(t_j)} p_{(t_j,w_i)}
	\end{equation}
	where $W_{v}(t_j)$ are the workers assigned to task $t_j$, $p_i$ is the maximum platform revenue which got from the worker completing the task. So the upper bound is $\phi_{\mathbb{Z}} (S^*) =d \cdot \sum_{t_j \in T} p_{<t_j,{W_{v}(t_j)}^*>}=d \cdot \overline p_{t_j}$.
\end{proof}

\textbf{Complexity Analysis.} The time complexity is $O(\vert T \vert \cdot \vert {S_t}\vert \cdot {\vert W \vert}^2+{\vert W\vert}^2 \cdot \vert {S_t}\vert\cdot d \cdot \overline p_{t_j})$, where $\vert T \vert$ is the number of tasks, $\vert W \vert$ is the number of workers, $\vert {S_t} \vert$ is the maximum number of skills, $d \cdot \overline p_{t_j}$ is the number of iterations which adjusts the best response strategy of each worker until the Nash equilibrium is reached.

	\section{Experimental study}
	\label{Sect.6}
	\subsection{Experiment setup}
	We use two datasets in our experiment. For the real dataset, we use the taxi data from Didi Chuxing~\cite{gaiyadidi}, which contains order data in Chengdu from November 1 to November 30, 2016. The order data has information on pick-ups and drop-offs, the start and the end of billing time. We use the pick-up location and the starting time of billing as the location information and arrival time of the task respectively. Since the worker becomes available again after the passenger gets off the car, the drop-off location is used as the location of the worker and the end billing time is used as the arrival time of the worker. For the synthetic dataset, we randomly generate task requests and workers in a rectangular area of Chengdu. The arrival time distribution of workers and task requests follows the uniform distribution in a day. Table~\ref{tab2} depicts our experimental settings, where the default values of parameters are in bold font.
	
	
	\begin{table}
		\caption{Experiments settings}\label{tab2}
		\centering
		\begin{tabular}{|l|l|}
			\hline
			\rule{0pt}{12pt}
			\makecell[c]{\bfseries Parameter} &  \makecell[c]{\bfseries Setting}\\  
			\hline
			\makecell[c]{Fixed range constraint $r$} & \makecell[c]{600, 800, {\bfseries 1000}, 1200, 1400}\\
			\makecell[c]{Extra range constraint $b$} & \makecell[c]{(400,600),600,800),{\bfseries (800,1000)}, (1000,1200), (1200,1400)}\\
			\makecell[c]{Number of skills $S$} & \makecell[c]{10, 11, {\bfseries 12}, 13, 14}\\
			\makecell[c]{Number of tasks $T$} & \makecell[c]{800, 900, {\bfseries 1000}, 1100, 1200}\\
			\makecell[c]{Number of workers $W$} & \makecell[c]{2400, 2700, {\bfseries 3000}, 3300, 3600}\\
			\hline
		\end{tabular}
	\end{table}

	\textbf{Compared algorithms.} We evaluate the performance of the representative algorithms, i.e., Random algorithm (RAN), Greedy algorithm (GRY), and Extra budget-aware Game-Theoretic algorithm (EBGT). All the algorithms are implemented in Java, run on a machine with Intel(R) Core (TM) i7-7700 CPU @ 3.60GHz and 16 GB RAM.
	
	\subsection{Results on the real dataset}

    \subsubsection{Effect of the number of tasks.} In this section, we explore the effect of the number of tasks of the algorithms. As shown in Fig.~\ref{fig6}(a), the running time of algorithms increases with the increment of the number of tasks. EBGT runs longer than RAN and GRY because it needs multiple iterations to find the optimal solution of each round continuously. However, the running time is totally acceptable even when the number of tasks reaches 1200, i.e., about 16 seconds.  In Fig.~\ref{fig6}(b), obviously, the platform revenue increases with the increment of the number of tasks. The reason is that more tasks can be completed which will generate more platform revenue. EBGT gains the highest platform revenue because it finds the best-response task for each worker.
    
    \subsubsection{Effect of the number of workers.} As shown in Fig.~\ref{fig7}(a), the running time of algorithms increases with the increment of the number of workers. EBGT runs longer than RAN and GRY because it needs multi-rounds of iteration to find the optimal match for each worker. From Fig.~\ref{fig7}(b), we can see the platform revenue increases with the increment of the number of workers. Since more workers mean that the platform can select the workers who make the platform more rewarding to complete the tasks. EBGT still gains the most platform revenue because it matches better workers to tasks.
	 \begin{figure}[htbp]
		\centering
		\subfigure[\scriptsize Running time]{
			\begin{minipage}[t]{0.45\linewidth}
				\centering
				\includegraphics[width=5.5cm]{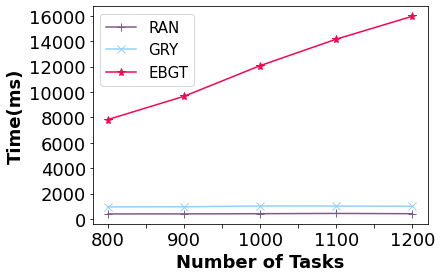}
			\end{minipage}%
		}%
		\subfigure[\scriptsize Total revenue]{
			\begin{minipage}[t]{0.45\linewidth}
				\centering
				\includegraphics[width=5.5cm]{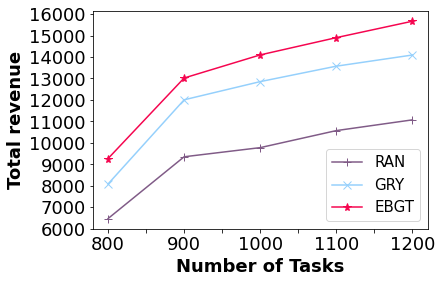}
			\end{minipage}%
		}%
		\setlength{\abovecaptionskip}{0pt}
		\setlength{\belowcaptionskip}{10pt}
		\centering
		\caption{Effect of the number of tasks on real dataset}\label{fig6}
	\end{figure}
		
      \begin{figure}[htbp]
		\centering
		\subfigure[\scriptsize Running time]{
			\begin{minipage}[t]{0.45\linewidth}
				\centering
				\includegraphics[width=5.5cm]{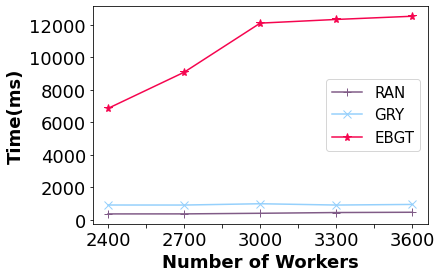}
			\end{minipage}%
		}%
		\subfigure[\scriptsize Total revenue]{
			\begin{minipage}[t]{0.45\linewidth}
				\centering
				\includegraphics[width=5.5cm]{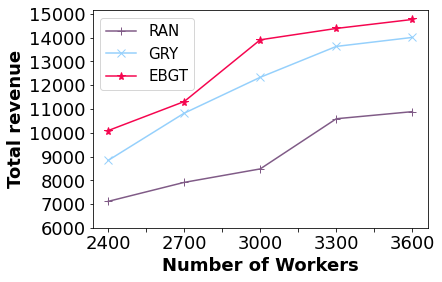}
			\end{minipage}%
		}%
		\setlength{\abovecaptionskip}{0pt}
		\setlength{\belowcaptionskip}{10pt}
		\centering
		\caption{Effect of the number of workers on real dataset}\label{fig7}
	\end{figure}
	
      \begin{figure}[htbp]
		\centering
		\subfigure[\scriptsize Running time]{
			\begin{minipage}[t]{0.45\linewidth}
				\centering
				\includegraphics[width=5.5cm]{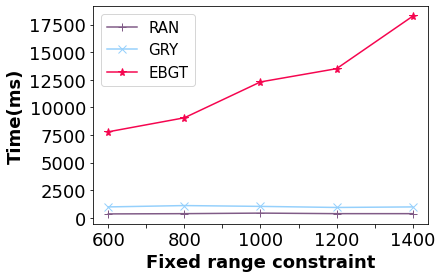}
			\end{minipage}%
		}%
		\subfigure[\scriptsize Total revenue]{
			\begin{minipage}[t]{0.45\linewidth}
				\centering
				\includegraphics[width=5.5cm]{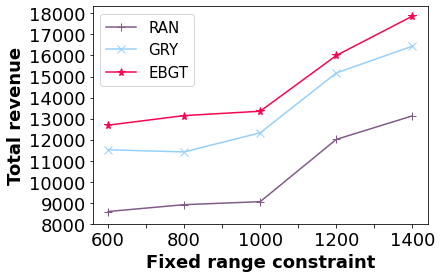}
			\end{minipage}%
		}%
		\setlength{\abovecaptionskip}{0pt}
		\setlength{\belowcaptionskip}{10pt}
		\caption{Effect of the fixed range constraint on real dataset}\label{fig1}
	\end{figure}
	
	\subsubsection{Effect of the fixed range constraint.} As shown in Fig.~\ref{fig1}(a), the running time of algorithms increases with the larger fixed range constraint. This is because more workers will be located in the fixed range constraint of each task which needs longer running time to be processed. EBGT runs slower than RAN and GRY because it needs multiple rounds of iteration to find the optimal solution of each round continuously. As the platform revenue results are shown in Fig.~\ref{fig1}(b), it increases as the fixed range constraint grows. EBGT gains the highest platform revenue for it finds the best-response task for each worker.

	\begin{figure}[htbp]
		\centering
		\subfigure[\scriptsize Running time]{
			\begin{minipage}[t]{0.45\linewidth}
				\centering
				\includegraphics[width=5.5cm]{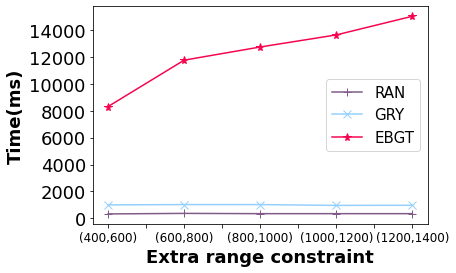}
			\end{minipage}%
		}%
		\subfigure[\scriptsize Total revenue]{
			\begin{minipage}[t]{0.45\linewidth}
				\centering
				\includegraphics[width=5.5cm]{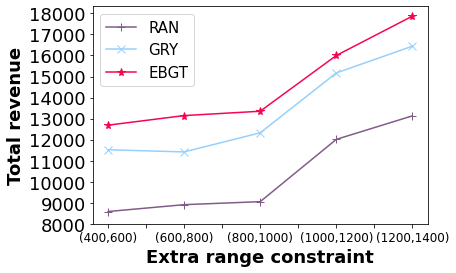}
			\end{minipage}%
		}%
		\setlength{\abovecaptionskip}{0pt}
		\setlength{\belowcaptionskip}{10pt}
		\centering
		\caption{Effect of the extra range constraint on real dataset}\label{fig2}
	\end{figure}
	\begin{figure}[htbp]
		\centering
		\subfigure[Running time]{
			\begin{minipage}[t]{0.45\linewidth}
				\includegraphics[width=\linewidth]{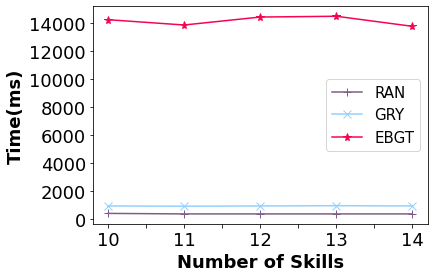}
			\end{minipage}%
		}%
		\subfigure[ Total revenue]{
			\begin{minipage}[t]{0.45\linewidth}
				\includegraphics[width=\linewidth]{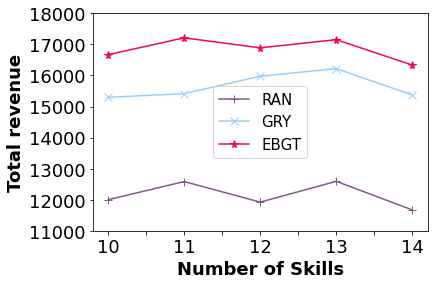}
			\end{minipage}%
		}%
		\setlength{\abovecaptionskip}{0pt}
		\setlength{\belowcaptionskip}{10pt}
		\centering
		\caption{Effect of the number of skills on real dataset}\label{fig3}
	\end{figure}

        \begin{figure}[htbp]
		\centering
		\subfigure[\scriptsize Running time]{
			\begin{minipage}[t]{0.45\linewidth}
				\centering
				\includegraphics[width=5.5cm]{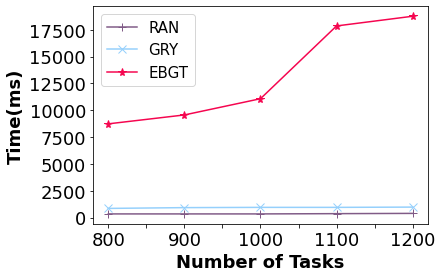}
			\end{minipage}%
		}%
		\subfigure[\scriptsize Total revenue]{
			\begin{minipage}[t]{0.45\linewidth}
				\centering
				\includegraphics[width=5.5cm]{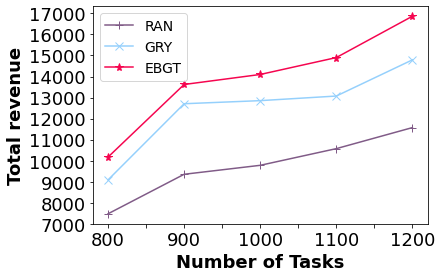}
			\end{minipage}%
		}%
		\setlength{\abovecaptionskip}{0pt}
		\setlength{\belowcaptionskip}{10pt}
		\centering
		\caption{Effect of the number of tasks on synthetic dataset}\label{fig4}
	\end{figure}
 
	\subsubsection{Effect of the extra range constraint.} As depicted in Fig.~\ref{fig2}(a), when the extra range constraint becomes larger, the running time of algorithms increases. This is because more workers will be located in the extra range constraint of each task which needs longer running time. From Fig.~\ref{fig2}(b) we can see that the platform revenue of all the algorithms increases with the larger extra range constraint, this is because more pairs will be matched. EBGT gains keep owning the highest platform revenue as it finds the best-response task for each worker.

	\begin{figure}[htbp]
		\centering
		\subfigure[\scriptsize Running time]{
			\begin{minipage}[t]{0.45\linewidth}
				\centering
				\includegraphics[width=5.5cm]{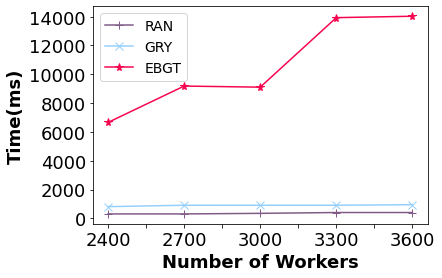}
			\end{minipage}%
		}%
		\subfigure[\scriptsize Total revenue]{
			\begin{minipage}[t]{0.45\linewidth}
				\centering
				\includegraphics[width=5.5cm]{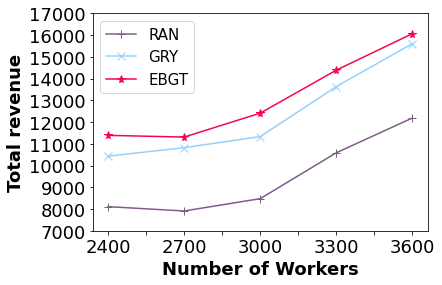}
			\end{minipage}%
		}%
		\setlength{\abovecaptionskip}{0pt}
		\setlength{\belowcaptionskip}{10pt}
		\centering
		\caption{Effect of the number of workers on synthetic dataset}\label{fig5}
	\end{figure}
 
	\subsubsection{Effect of the number of skills.} As we can see in Fig.~\ref{fig3}(a), the running time of algorithms with the number of skills changes little. This is because as the number of skills required for a task increases, the number of tasks that can be satisfied decreases, which leads to a reduction in the number of iterations, which in turn makes the running time stable. EBGT still runs longer than the other two algorithms because it needs multiple rounds of iteration to find the optimal solution of each round continuously. From Fig.~\ref{fig5}(b), we can see that the total revenue of all three algorithms changes a little. EBGT gains the highest platform revenue and RAN gains the least platform revenue.

	\subsection{Results on the synthetic dataset}

	\subsubsection{Effect of the number of tasks.} As we can see in Fig.~\ref{fig4}(a), the running time of algorithms increases with the increment of the number of tasks. Similarly, EBGT runs longer than RAN and GRY. In Fig.~\ref{fig4}(b), the platform revenue increases with the increment of the number of tasks. The reason is that more tasks can be completed which will generate more platform revenue. EBGT still gains the most platform revenue. In summary, the experimental results are similar to those on the real data set above.

	\subsubsection{Effect of the number of workers.} In Fig.~\ref{fig5}(a), the running time of algorithms increases with the increment of the number of workers. EBGT runs longer than RAN and GRY because it needs multiple rounds of iteration to find the optimal solution. From Fig.~\ref{fig5}(b), the platform revenue increases with the increment of the number of workers. The reason is that more workers can complete tasks which will generate more platform revenue. EBGT gains the most platform revenue. In conclusion, the results of the experiment are also similar to the above on the real data set.

\section{Conclusion}
\label{Sect.7}
In this paper, we study the problem of the Skilled Task Assignment with Extra Budget (STAEB) in spatial crowdsourcing, where each task with an extra budget may need multi-skill workers to complete them. We propose two approximation algorithms, including greedy and game-theoretic approaches. Specifically, the greedy approach sorts tasks in order of average fee of skills and greedily assigns fewer workers to cover the skills required by tasks. In addition, we propose a game-theoretic approach to further increase the total platform revenue. Extensive experiments on real and synthetic datasets show that our proposals achieve good efficiency and scalability.
\paragraph*{Supplemental Material Statement:} Source code for STAEB is attached with the submission on ---, our taxi data is available in \cite{gaiyadidi} and our synthetic dataset can be generated with \textbf{gMission} \cite{gMission}.

\bibliography{sn-bibliography}


\begin{thebibliography}{44}
\ifx \bisbn   \undefined \def \bisbn  #1{ISBN #1}\fi
\ifx \binits  \undefined \def \binits#1{#1}\fi
\ifx \bauthor  \undefined \def \bauthor#1{#1}\fi
\ifx \batitle  \undefined \def \batitle#1{#1}\fi
\ifx \bjtitle  \undefined \def \bjtitle#1{#1}\fi
\ifx \bvolume  \undefined \def \bvolume#1{\textbf{#1}}\fi
\ifx \byear  \undefined \def \byear#1{#1}\fi
\ifx \bissue  \undefined \def \bissue#1{#1}\fi
\ifx \bfpage  \undefined \def \bfpage#1{#1}\fi
\ifx \blpage  \undefined \def \blpage #1{#1}\fi
\ifx \burl  \undefined \def \burl#1{\textsf{#1}}\fi
\ifx \doiurl  \undefined \def \doiurl#1{\url{https://doi.org/#1}}\fi
\ifx \betal  \undefined \def \betal{\textit{et al.}}\fi
\ifx \binstitute  \undefined \def \binstitute#1{#1}\fi
\ifx \binstitutionaled  \undefined \def \binstitutionaled#1{#1}\fi
\ifx \bctitle  \undefined \def \bctitle#1{#1}\fi
\ifx \beditor  \undefined \def \beditor#1{#1}\fi
\ifx \bpublisher  \undefined \def \bpublisher#1{#1}\fi
\ifx \bbtitle  \undefined \def \bbtitle#1{#1}\fi
\ifx \bedition  \undefined \def \bedition#1{#1}\fi
\ifx \bseriesno  \undefined \def \bseriesno#1{#1}\fi
\ifx \blocation  \undefined \def \blocation#1{#1}\fi
\ifx \bsertitle  \undefined \def \bsertitle#1{#1}\fi
\ifx \bsnm \undefined \def \bsnm#1{#1}\fi
\ifx \bsuffix \undefined \def \bsuffix#1{#1}\fi
\ifx \bparticle \undefined \def \bparticle#1{#1}\fi
\ifx \barticle \undefined \def \barticle#1{#1}\fi
\bibcommenthead
\ifx \bconfdate \undefined \def \bconfdate #1{#1}\fi
\ifx \botherref \undefined \def \botherref #1{#1}\fi
\ifx \url \undefined \def \url#1{\textsf{#1}}\fi
\ifx \bchapter \undefined \def \bchapter#1{#1}\fi
\ifx \bbook \undefined \def \bbook#1{#1}\fi
\ifx \bcomment \undefined \def \bcomment#1{#1}\fi
\ifx \oauthor \undefined \def \oauthor#1{#1}\fi
\ifx \citeauthoryear \undefined \def \citeauthoryear#1{#1}\fi
\ifx \endbibitem  \undefined \def \endbibitem {}\fi
\ifx \bconflocation  \undefined \def \bconflocation#1{#1}\fi
\ifx \arxivurl  \undefined \def \arxivurl#1{\textsf{#1}}\fi
\csname PreBibitemsHook\endcsname

\bibitem{zhao2020predictive}
\begin{bchapter}
\bauthor{\bsnm{Zhao}, \binits{Y.}},
\bauthor{\bsnm{Zheng}, \binits{K.}},
\bauthor{\bsnm{Cui}, \binits{Y.}},
\bauthor{\bsnm{Su}, \binits{H.}},
\bauthor{\bsnm{Zhu}, \binits{F.}},
\bauthor{\bsnm{Zhou}, \binits{X.}}:
\bctitle{Predictive task assignment in spatial crowdsourcing: a data-driven
  approach}.
In: \bbtitle{ICDE},
pp. \bfpage{13}--\blpage{24}
(\byear{2020}).
\bcomment{IEEE}
\end{bchapter}
\endbibitem

\bibitem{to2016real}
\begin{bchapter}
\bauthor{\bsnm{To}, \binits{H.}},
\bauthor{\bsnm{Fan}, \binits{L.}},
\bauthor{\bsnm{Tran}, \binits{L.}},
\bauthor{\bsnm{Shahabi}, \binits{C.}}:
\bctitle{Real-time task assignment in hyperlocal spatial crowdsourcing under
  budget constraints}.
In: \bbtitle{PerCom},
pp. \bfpage{1}--\blpage{8}
(\byear{2016}).
\bcomment{IEEE}
\end{bchapter}
\endbibitem

\bibitem{chen2020fair}
\begin{barticle}
\bauthor{\bsnm{Chen}, \binits{Z.}},
\bauthor{\bsnm{Cheng}, \binits{P.}},
\bauthor{\bsnm{Chen}, \binits{L.}},
\bauthor{\bsnm{Lin}, \binits{X.}},
\bauthor{\bsnm{Shahabi}, \binits{C.}}:
\batitle{Fair task assignment in spatial crowdsourcing}.
\bjtitle{VLDB}
\bvolume{13}(\bissue{12}),
\bfpage{2479}--\blpage{2492}
(\byear{2020})
\end{barticle}
\endbibitem

\bibitem{liu2020budget}
\begin{barticle}
\bauthor{\bsnm{Liu}, \binits{J.-X.}},
\bauthor{\bsnm{Xu}, \binits{K.}}:
\batitle{Budget-aware online task assignment in spatial crowdsourcing}.
\bjtitle{WWW}
\bvolume{23}(\bissue{1}),
\bfpage{289}--\blpage{311}
(\byear{2020})
\end{barticle}
\endbibitem

\bibitem{tong2019two}
\begin{botherref}
\oauthor{\bsnm{Tong}, \binits{Y.}},
\oauthor{\bsnm{Zeng}, \binits{Y.}},
\oauthor{\bsnm{Ding}, \binits{B.}},
\oauthor{\bsnm{Wang}, \binits{L.}},
\oauthor{\bsnm{Chen}, \binits{L.}}:
Two-sided online micro-task assignment in spatial crowdsourcing.
TKDE
(2019)
\end{botherref}
\endbibitem

\bibitem{zheng2020Online}
\begin{bchapter}
\bauthor{\bsnm{Zheng}, \binits{B.}},
\bauthor{\bsnm{Huang}, \binits{C.}},
\bauthor{\bsnm{Jensen}, \binits{C.S.}},
\bauthor{\bsnm{Chen}, \binits{L.}},
\bauthor{\bsnm{Zheng}, \binits{K.}}:
\bctitle{Online trichromatic pickup and delivery scheduling in spatial
  crowdsourcing}.
In: \bbtitle{2020 IEEE 36th International Conference on Data Engineering
  (ICDE)}
(\byear{2020})
\end{bchapter}
\endbibitem

\bibitem{cheng2016taskmulti}
\begin{barticle}
\bauthor{\bsnm{Cheng}, \binits{P.}},
\bauthor{\bsnm{Lian}, \binits{X.}},
\bauthor{\bsnm{Chen}, \binits{L.}},
\bauthor{\bsnm{Han}, \binits{J.}},
\bauthor{\bsnm{Zhao}, \binits{J.}}:
\batitle{Task assignment on multi-skill oriented spatial crowdsourcing}.
\bjtitle{TKDE}
\bvolume{28}(\bissue{8}),
\bfpage{2201}--\blpage{2215}
(\byear{2016})
\end{barticle}
\endbibitem

\bibitem{cheng2019cooperation}
\begin{bchapter}
\bauthor{\bsnm{Cheng}, \binits{P.}},
\bauthor{\bsnm{Chen}, \binits{L.}},
\bauthor{\bsnm{Ye}, \binits{J.}}:
\bctitle{Cooperation-aware task assignment in spatial crowdsourcing}.
In: \bbtitle{2019 IEEE 35th International Conference on Data Engineering
  (ICDE)},
pp. \bfpage{1442}--\blpage{1453}
(\byear{2019}).
\bcomment{IEEE}
\end{bchapter}
\endbibitem

\bibitem{tong2020spatial}
\begin{barticle}
\bauthor{\bsnm{Tong}, \binits{Y.}},
\bauthor{\bsnm{Zhou}, \binits{Z.}},
\bauthor{\bsnm{Zeng}, \binits{Y.}},
\bauthor{\bsnm{Chen}, \binits{L.}},
\bauthor{\bsnm{Shahabi}, \binits{C.}}:
\batitle{Spatial crowdsourcing: a survey}.
\bjtitle{VLDBJ}
\bvolume{29}(\bissue{1}),
\bfpage{217}--\blpage{250}
(\byear{2020})
\end{barticle}
\endbibitem

\bibitem{kazemi2012geocrowd}
\begin{bchapter}
\bauthor{\bsnm{Kazemi}, \binits{L.}},
\bauthor{\bsnm{Shahabi}, \binits{C.}}:
\bctitle{Geocrowd: enabling query answering with spatial crowdsourcing}.
In: \bbtitle{SIGSPATIAL},
pp. \bfpage{189}--\blpage{198}
(\byear{2012})
\end{bchapter}
\endbibitem

\bibitem{waissi1994network}
\begin{botherref}
\oauthor{\bsnm{Waissi}, \binits{G.R.}}:
Network flows: Theory, algorithms, and applications.
JSTOR
(1994)
\end{botherref}
\endbibitem

\bibitem{burkard2012assignment}
\begin{bbook}
\bauthor{\bsnm{Burkard}, \binits{R.}},
\bauthor{\bsnm{Dell'Amico}, \binits{M.}},
\bauthor{\bsnm{Martello}, \binits{S.}}:
\bbtitle{Assignment Problems: Revised Reprint}.
\bpublisher{SIAM}, \blocation{???}
(\byear{2012})
\end{bbook}
\endbibitem

\bibitem{derigs1981shortest}
\begin{barticle}
\bauthor{\bsnm{Derigs}, \binits{U.}}:
\batitle{A shortest augmenting path method for solving minimal perfect matching
  problems}.
\bjtitle{Networks}
\bvolume{11}(\bissue{4}),
\bfpage{379}--\blpage{390}
(\byear{1981})
\end{barticle}
\endbibitem

\bibitem{leong2008capacity}
\begin{botherref}
\oauthor{\bsnm{LEONG}, \binits{H.U.}},
\oauthor{\bsnm{Yiu}, \binits{M.L.}},
\oauthor{\bsnm{Mouratidis}, \binits{K.}},
\oauthor{\bsnm{Mamoulis}, \binits{N.}}:
Capacity constrained assignment in spatial databases
(2008)
\end{botherref}
\endbibitem

\bibitem{corral2000closest}
\begin{barticle}
\bauthor{\bsnm{Corral}, \binits{A.}},
\bauthor{\bsnm{Manolopoulos}, \binits{Y.}},
\bauthor{\bsnm{Theodoridis}, \binits{Y.}},
\bauthor{\bsnm{Vassilakopoulos}, \binits{M.}}:
\batitle{Closest pair queries in spatial databases}.
\bjtitle{ACM SIGMOD Record}
\bvolume{29}(\bissue{2}),
\bfpage{189}--\blpage{200}
(\byear{2000})
\end{barticle}
\endbibitem

\bibitem{wong2007efficient}
\begin{bchapter}
\bauthor{\bsnm{Wong}, \binits{R.C.-W.}},
\bauthor{\bsnm{Tao}, \binits{Y.}},
\bauthor{\bsnm{Fu}, \binits{A.W.-C.}},
\bauthor{\bsnm{Xiao}, \binits{X.}}:
\bctitle{On efficient spatial matching}.
In: \bbtitle{Proceedings of the 33rd International Conference on Very Large
  Data Bases},
pp. \bfpage{579}--\blpage{590}
(\byear{2007})
\end{bchapter}
\endbibitem

\bibitem{ford1956maximal}
\begin{barticle}
\bauthor{\bsnm{Ford}, \binits{L.R.}},
\bauthor{\bsnm{Fulkerson}, \binits{D.R.}}:
\batitle{Maximal flow through a network}.
\bjtitle{Canadian journal of Mathematics}
\bvolume{8},
\bfpage{399}--\blpage{404}
(\byear{1956})
\end{barticle}
\endbibitem

\bibitem{he2014toward}
\begin{bchapter}
\bauthor{\bsnm{He}, \binits{S.}},
\bauthor{\bsnm{Shin}, \binits{D.-H.}},
\bauthor{\bsnm{Zhang}, \binits{J.}},
\bauthor{\bsnm{Chen}, \binits{J.}}:
\bctitle{Toward optimal allocation of location dependent tasks in
  crowdsensing}.
In: \bbtitle{IEEE INFOCOM 2014-IEEE Conference on Computer Communications},
pp. \bfpage{745}--\blpage{753}
(\byear{2014}).
\bcomment{IEEE}
\end{bchapter}
\endbibitem

\bibitem{to2015server}
\begin{barticle}
\bauthor{\bsnm{To}, \binits{H.}},
\bauthor{\bsnm{Shahabi}, \binits{C.}},
\bauthor{\bsnm{Kazemi}, \binits{L.}}:
\batitle{A server-assigned spatial crowdsourcing framework}.
\bjtitle{ACM Transactions on Spatial Algorithms and Systems (TSAS)}
\bvolume{1}(\bissue{1}),
\bfpage{1}--\blpage{28}
(\byear{2015})
\end{barticle}
\endbibitem

\bibitem{ahuja1988network}
\begin{botherref}
\oauthor{\bsnm{Ahuja}, \binits{R.K.}},
\oauthor{\bsnm{Magnanti}, \binits{T.L.}},
\oauthor{\bsnm{Orlin}, \binits{J.B.}}:
Network flows
(1988)
\end{botherref}
\endbibitem

\bibitem{deng2013maximizing}
\begin{bchapter}
\bauthor{\bsnm{Deng}, \binits{D.}},
\bauthor{\bsnm{Shahabi}, \binits{C.}},
\bauthor{\bsnm{Demiryurek}, \binits{U.}}:
\bctitle{Maximizing the number of worker's self-selected tasks in spatial
  crowdsourcing}.
In: \bbtitle{Proceedings of the 21st ACM Sigspatial International Conference on
  Advances in Geographic Information Systems},
pp. \bfpage{324}--\blpage{333}
(\byear{2013})
\end{bchapter}
\endbibitem

\bibitem{deng2016task}
\begin{barticle}
\bauthor{\bsnm{Deng}, \binits{D.}},
\bauthor{\bsnm{Shahabi}, \binits{C.}},
\bauthor{\bsnm{Demiryurek}, \binits{U.}},
\bauthor{\bsnm{Zhu}, \binits{L.}}:
\batitle{Task selection in spatial crowdsourcing from worker’s perspective}.
\bjtitle{GeoInformatica}
\bvolume{20}(\bissue{3}),
\bfpage{529}--\blpage{568}
(\byear{2016})
\end{barticle}
\endbibitem

\bibitem{tong2016online}
\begin{bchapter}
\bauthor{\bsnm{Tong}, \binits{Y.}},
\bauthor{\bsnm{She}, \binits{J.}},
\bauthor{\bsnm{Ding}, \binits{B.}},
\bauthor{\bsnm{Wang}, \binits{L.}},
\bauthor{\bsnm{Chen}, \binits{L.}}:
\bctitle{Online mobile micro-task allocation in spatial crowdsourcing}.
In: \bbtitle{ICDE},
pp. \bfpage{49}--\blpage{60}
(\byear{2016}).
\bcomment{IEEE}
\end{bchapter}
\endbibitem

\bibitem{song2017trichromatic}
\begin{bchapter}
\bauthor{\bsnm{Song}, \binits{T.}},
\bauthor{\bsnm{Tong}, \binits{Y.}},
\bauthor{\bsnm{Wang}, \binits{L.}},
\bauthor{\bsnm{She}, \binits{J.}},
\bauthor{\bsnm{Yao}, \binits{B.}},
\bauthor{\bsnm{Chen}, \binits{L.}},
\bauthor{\bsnm{Xu}, \binits{K.}}:
\bctitle{Trichromatic online matching in real-time spatial crowdsourcing}.
In: \bbtitle{ICDE},
pp. \bfpage{1009}--\blpage{1020}
(\byear{2017}).
\bcomment{IEEE}
\end{bchapter}
\endbibitem

\bibitem{cheng2020real}
\begin{bchapter}
\bauthor{\bsnm{Cheng}, \binits{Y.}},
\bauthor{\bsnm{Li}, \binits{B.}},
\bauthor{\bsnm{Zhou}, \binits{X.}},
\bauthor{\bsnm{Yuan}, \binits{Y.}},
\bauthor{\bsnm{Wang}, \binits{G.}},
\bauthor{\bsnm{Chen}, \binits{L.}}:
\bctitle{Real-time cross online matching in spatial crowdsourcing}.
In: \bbtitle{ICDE},
pp. \bfpage{1}--\blpage{12}
(\byear{2020}).
\bcomment{IEEE}
\end{bchapter}
\endbibitem

\bibitem{peng2020user}
\begin{barticle}
\bauthor{\bsnm{Peng}, \binits{W.}},
\bauthor{\bsnm{Liu}, \binits{A.}},
\bauthor{\bsnm{Li}, \binits{Z.}},
\bauthor{\bsnm{Liu}, \binits{G.}},
\bauthor{\bsnm{Li}, \binits{Q.}}:
\batitle{User experience-driven secure task assignment in spatial
  crowdsourcing}.
\bjtitle{WWW}
\bvolume{23}(\bissue{3}),
\bfpage{2131}--\blpage{2151}
(\byear{2020})
\end{barticle}
\endbibitem

\bibitem{gao2017team}
\begin{bchapter}
\bauthor{\bsnm{Gao}, \binits{T.} \bsuffix{Dawei}}:
\bctitle{Team-oriented task planning in spatial crowdsourcing}.
In: \bbtitle{Web and Big Data}
(\byear{2017})
\end{bchapter}
\endbibitem

\bibitem{she2015utility}
\begin{bchapter}
\bauthor{\bsnm{She}, \binits{J.}},
\bauthor{\bsnm{Tong}, \binits{Y.}},
\bauthor{\bsnm{Chen}, \binits{L.}}:
\bctitle{Utility-aware social event-participant planning}.
In: \bbtitle{Proceedings of the 2015 ACM SIGMOD International Conference on
  Management of Data},
pp. \bfpage{1629}--\blpage{1643}
(\byear{2015})
\end{bchapter}
\endbibitem

\bibitem{asghari2016price}
\begin{bchapter}
\bauthor{\bsnm{Asghari}, \binits{M.}},
\bauthor{\bsnm{Deng}, \binits{D.}},
\bauthor{\bsnm{Shahabi}, \binits{C.}},
\bauthor{\bsnm{Demiryurek}, \binits{U.}},
\bauthor{\bsnm{Li}, \binits{Y.}}:
\bctitle{Price-aware real-time ride-sharing at scale: an auction-based
  approach}.
In: \bbtitle{Proceedings of the 24th ACM SIGSPATIAL International Conference on
  Advances in Geographic Information Systems},
pp. \bfpage{1}--\blpage{10}
(\byear{2016})
\end{bchapter}
\endbibitem

\bibitem{tao2018multi}
\begin{bchapter}
\bauthor{\bsnm{Tao}, \binits{Q.}},
\bauthor{\bsnm{Zeng}, \binits{Y.}},
\bauthor{\bsnm{Zhou}, \binits{Z.}},
\bauthor{\bsnm{Tong}, \binits{Y.}},
\bauthor{\bsnm{Chen}, \binits{L.}},
\bauthor{\bsnm{Xu}, \binits{K.}}:
\bctitle{Multi-worker-aware task planning in real-time spatial crowdsourcing}.
In: \bbtitle{International Conference on Database Systems for Advanced
  Applications},
pp. \bfpage{301}--\blpage{317}
(\byear{2018}).
\bcomment{Springer}
\end{bchapter}
\endbibitem

\bibitem{zheng2018order}
\begin{barticle}
\bauthor{\bsnm{Zheng}, \binits{L.}},
\bauthor{\bsnm{Chen}, \binits{L.}},
\bauthor{\bsnm{Ye}, \binits{J.}}:
\batitle{Order dispatch in price-aware ridesharing}.
\bjtitle{Proceedings of the VLDB Endowment}
\bvolume{11}(\bissue{8}),
\bfpage{853}--\blpage{865}
(\byear{2018})
\end{barticle}
\endbibitem

\bibitem{huang2013large}
\begin{botherref}
\oauthor{\bsnm{Huang}, \binits{Y.}},
\oauthor{\bsnm{Jin}, \binits{R.}},
\oauthor{\bsnm{Bastani}, \binits{F.}},
\oauthor{\bsnm{Wang}, \binits{X.S.}}:
Large scale real-time ridesharing with service guarantee on road networks.
arXiv preprint arXiv:1302.6666
(2013)
\end{botherref}
\endbibitem

\bibitem{ma2013t}
\begin{bchapter}
\bauthor{\bsnm{Ma}, \binits{S.}},
\bauthor{\bsnm{Zheng}, \binits{Y.}},
\bauthor{\bsnm{Wolfson}, \binits{O.}}:
\bctitle{T-share: A large-scale dynamic taxi ridesharing service}.
In: \bbtitle{2013 IEEE 29th International Conference on Data Engineering
  (ICDE)},
pp. \bfpage{410}--\blpage{421}
(\byear{2013}).
\bcomment{IEEE}
\end{bchapter}
\endbibitem

\bibitem{asghari2017line}
\begin{bchapter}
\bauthor{\bsnm{Asghari}, \binits{M.}},
\bauthor{\bsnm{Shahabi}, \binits{C.}}:
\bctitle{On on-line task assignment in spatial crowdsourcing}.
In: \bbtitle{2017 IEEE International Conference on Big Data (Big Data)},
pp. \bfpage{395}--\blpage{404}
(\byear{2017}).
\bcomment{IEEE}
\end{bchapter}
\endbibitem

\bibitem{gao2017Top}
\begin{barticle}
\bauthor{\bsnm{Gao}, \binits{D.}},
\bauthor{\bsnm{Tong}, \binits{Y.}},
\bauthor{\bsnm{She}, \binits{J.}},
\bauthor{\bsnm{Song…}, \binits{T.}}:
\batitle{Top-k team recommendation and its variants in spatial crowdsourcing}.
\bjtitle{Data Science and Engineering}
\bvolume{2}(\bissue{2}),
\bfpage{136}--\blpage{150}
(\byear{2017})
\end{barticle}
\endbibitem

\bibitem{song2019Multi}
\begin{botherref}
\oauthor{\bsnm{Song}, \binits{T.}},
\oauthor{\bsnm{Xu}, \binits{K.}},
\oauthor{\bsnm{Li}, \binits{J.}},
\oauthor{\bsnm{Li}, \binits{Y.}},
\oauthor{\bsnm{Tong}, \binits{Y.}}:
Multi-skill aware task assignment in real-time spatial crowdsourcing.
GeoInformatica
(2019)
\end{botherref}
\endbibitem

\bibitem{Li2020Group}
\begin{bbook}
\bauthor{\bsnm{Li}, \binits{X.}},
\bauthor{\bsnm{Zhao}, \binits{Y.}},
\bauthor{\bsnm{Guo}, \binits{J.}},
\bauthor{\bsnm{Zheng}, \binits{K.}}:
\bbtitle{Group Task Assignment with Social Impact-Based Preference in Spatial
  Crowdsourcing}.
\bpublisher{Springer}, \blocation{???}
(\byear{2020})
\end{bbook}
\endbibitem

\bibitem{ni2020Task}
\begin{bchapter}
\bauthor{\bsnm{Wangze}, \binits{N.}},
\bauthor{\bsnm{Peng}, \binits{C.}},
\bauthor{\bsnm{Lei}, \binits{C.}},
\bauthor{\bsnm{Xuemin}, \binits{L.}}:
\bctitle{Task allocation in dependency-aware spatial crowdsourcing}.
In: \bbtitle{2020 IEEE 36th International Conference on Data Engineering
  (ICDE)}
(\byear{2020})
\end{bchapter}
\endbibitem

\bibitem{Fudenberg1992Game}
\begin{barticle}
\bauthor{\bsnm{Fudenberg}, \binits{D.}},
\bauthor{\bsnm{Tirole}, \binits{J.}}:
\batitle{Game theory}.
\bjtitle{Economica}
\bvolume{60}(\bissue{238}),
\bfpage{841}--\blpage{846}
(\byear{1992})
\end{barticle}
\endbibitem

\bibitem{Myerson1997Game}
\begin{bbook}
\bauthor{\bsnm{Myerson}, \binits{R.B.}}:
\bbtitle{Game Theory: Analysis of Conflict}.
\bpublisher{Game Theory: Analysis of Conflict}, \blocation{???}
(\byear{1997})
\end{bbook}
\endbibitem

\bibitem{Rasmusen2006Games}
\begin{botherref}
\oauthor{\bsnm{Rasmusen}, \binits{E.}}:
Games and information : an introduction to game theory.
st.ewi.tudelft.nl
(2006)
\end{botherref}
\endbibitem

\bibitem{1996Potential}
\begin{botherref}
\oauthor{\bsnm{Dov}, \binits{M.}},
\oauthor{\bsnm{S.}, \binits{S.L.}},
\oauthor{\bsnm{Lloyd}, \binits{S.}}:
Potential games
(1996)
\end{botherref}
\endbibitem

\bibitem{gaiyadidi}
\begin{botherref}
Didi Chuxing Data.
\url{https://gaia.didichuxing.com}
Accessed October 22, 2020
\end{botherref}
\endbibitem

\bibitem{gMission}
\begin{botherref}
\oauthor{\bsnm{Chen~Zhao}, \binits{R.F.} \bsuffix{Cheng~Peng}}:
\url{https://github.com/gmission/gmission}
\end{botherref}
\endbibitem

\end{thebibliography}


\end{document}